            \DeclareFontFamily{U}{wncy}{}
            \DeclareFontShape{U}{wncy}{m}{n}{%
               <5>wncyr5%
               <6>wncyr6%
               <7>wncyr7%
               <8>wncyr8%
               <9>wncyr9%
               <10>wncyr10%
               <11>wncyr10%
               <12>wncyr6%
               <14>wncyr7%
               <17>wncyr8%
               <20>wncyr10%
               <25>wncyr10}{}
\newtheorem{thm}{Theorem}[section]
\newtheorem{lem}[thm]{Lemma}
\newtheorem{cor}[thm]{Corollary}
\newtheorem{prop}[thm]{Proposition}
\theoremstyle{definition}
\newtheorem*{dfn}{Definition}
\newtheorem{remark}{Remark}[section]
\newtheorem{example}{Example}[]
\newcommand{\N}{\mathbb N}
\newcommand{\Z}{\mathbb Z}
\newcommand{\F}{\mathbb F}
\def\C{{\mathcal C}}
\def\ga{\gamma}
\def\s{\sigma}
\def\d{\delta}
\def\la{\lambda}
\def\ol{\overline}
\begin{document}

\title[Hulls of Matrix-Product Codes]{The Hulls of Matrix-Product Codes over Commutative Rings and Applications}

\author{Abdulaziz Deajim}
\address[A. Deajim]{Department of Mathematics, King Khalid University,
P.O. Box 9004, Abha, Saudi Arabia} \email{deajim@kku.edu.sa, deajim@gmail.com}

\author{Mohamed Bouye}
\address[M. Bouye]{Department of Mathematics, King Khalid University,
P.O. Box 9004, Abha, Saudi Arabia} \email{medeni.doc@gmail.com}

\author{Kenza Guenda}
\address[K. Guenda]{Faculty of Mathematics, USTHB, Laboratory of Algebra and Number Theory, BP 32  El Alia, Bab Ezzouar, Algeria} \email{ken.guenda@gmail.com}

\keywords{matrix-product codes, hulls of codes, LCD codes, torsion codes}
\subjclass[2010]{94B05, 94B15, 16S36}
\date{\today}

\begin {abstract}
Given a commutative ring $R$ with identity, a matrix $A\in M_{s\times l}(R)$, and $R$-linear codes $\mathcal{C}_1, \dots, \mathcal{C}_s$ of the same length, this article considers the hull of the matrix-product codes $[\mathcal{C}_1 \dots \mathcal{C}_s]\,A$. Consequently, it introduces various sufficient conditions under which $[\mathcal{C}_1 \dots \mathcal{C}_s]\,A$ is a linear complementary dual (LCD) code. As an application, LCD matrix-product codes arising from torsion codes over finite chain rings are considered. Highlighting examples are also given.
\end {abstract}
\maketitle

\section{{\bf Introduction}}\label{intro}
An active theme of research in coding theory is the construction of new codes by modifying or combining existing codes. In 2001, Blackmore and Norton \cite{BN} introduced the interesting and useful construction of matrix-product codes over finite fields. Such a construction included as special cases some previously well-known constructions such as the Plotkin's $(u|u+v)$-construction, the $(u+v+w|2u+v|u)$-construction, the Turyn's $(a+x|b+x|a+b+x)$-construction, and the $(u+v|u-v)$-construction. Through subsequent efforts of many researchers, matrix-product codes were further studied over finite fields and some types of finite commutative rings, see for instance \cite{As}, \cite{BD}, \cite{CLL}, \cite{FLL}, \cite{GHR}, \cite{HLR}, \cite{HR}, and \cite{MS}.

In \cite{Mas}, J.L. Massey introduced the notion of linear complementary dual (LCD) codes over finite fields. Ever since, many subsequent papers on LCD codes and their applications over finite fields and some finite commutative rings have appeared, see for instance \cite{DKOSS}, \cite{GOS}, \cite{JSU}, \cite{LDL}, \cite{LL1}, \cite{MGGSS}, and \cite{TJ}.

In a follow-up to \cite{DB}, we consider some aspects that connect the above tow notions: matrix-product codes and LCD codes over commutative rings. For this purpose, we first focus on studying the hull of matrix-product codes over such rings. We then use this to introduce various sufficient conditions under which a matrix-product code is an LCD code. As an application, LCD matrix-product codes arising from torsion codes over finite chain rings are considered. Highlighting examples are also given.

In order to put our results in a context as broad as possible, we assume in this article, unless otherwise stated, that {\bf $R$ stands for a commutative ring with identity 1}. We denote by $U(R)$ the multiplicative group of units of $R$. For the sake of completeness, we set below relevant terminologies and remind the reader of some facts needed in this article.

\subsection{Matrices}$\\$
\indent For positive integers $s$ and $l$, denote by $M_{s\times l}(R)$ the set of $s\times l$ matrices over $R$. In this article, we always assume that $s\leq l$. A square matrix over $R$ is called non-singular if its determinant is in $U(R)$. With $A\in M_{s\times l}(R)$, $A$ is said to be of {\it full row rank} (FRR) if its $s$ rows are linearly independent over $R$ (see \cite[Definition 2.4]{FLL}). Denoting the $s\times s$ identity matrix by $I_s$, $A$ is said to be {\it right-invertible} if there is a matrix $B\in M_{l\times s}(R)$, called a {\it right inverse} of $A$, such that $AB=I_s$. Left-invertibility of $A$ is defined similarly. If further $R$ is finite, then $A$ is right-invertible if and only if it is FRR (\cite[Corollary 2.7]{FLL}). A square matrix over $R$ is right-invertible if and only if it is left-invertible, in which case left and right inverses are equal and the square matrix is said to be {\it invertible} (\cite[p. 10]{Mc}). If further $R$ is finite, then a square matrix over $R$ is non-singular if and only if it is FRR if and only if it is invertible (\cite[Corollary 2.8]{FLL}).

If $R$ is a finite field and $A\in M_{s\times l}(R)$, we say that $A$ is {\it non-singular by columns} if, for every $1\leq t \leq s$, every $t\times t$ submatrix of $A_t$ is non-singular, where $A_t$ is the submatrix of $A$ consisting of the upper $t$ rows. This notion was first introduced over finite fields in \cite{BN} and then extended in \cite{FLL} to finite commutative Frobenius rings (and hence to finite commutative chain rings which we shall need in Section 3). %On the other hand, $A$ is said to be {\it super-regular} if every square submatrix of $A$ is non-singular. This notion is easily extended to finite commutative chain rings as well. It is clear that an NSC matrix is super-regular.}

By $\mbox{diag}(r_1, \dots, r_s)\in M_{s\times s}(R)$, we mean an $s\times s$ diagonal matrix whose diagonal entry in position $(i,i)$ is $r_i\in R$ for $i=1, \dots, s$; while by $\mbox{adiag}(r_1, \dots, r_s)\in M_{s\times s}(R)$ we mean an $s\times s$ anti-diagonal matrix whose anti-diagonal entry in position $(i, s-i+1)$ is $r_i\in R$ for $i=1, \dots, s$. If $A\in M_{s\times s}(R)$ is such that $AA^t=\mbox{diag}(r_1, \dots, r_s)$ or $AA^t=\mbox{adiag}(r_1, \dots, r_s)$ with $r_i\in U(R)$ for $i=1, \dots, s$ and $A^t$ denotes the usual transpose of $A$, then both $A$ and $A^t$ are non-singular. This is because the relevant properties of determinants over fields remain valid over commutative rings (see \cite{Mc})).

\subsection{Matrix-Product Codes}$\\$
\indent A {\it code} $\C$ over $R$ of length $m\in \N$ is simply a subset of $R^m$. If further $\C$ is an $R$-submodule of $R^m$, then we call it an {\it $R$-linear code} or just a {\it linear code}. The distance on codes is meant here to be the Hamming distance, and we denote the minimum distance of $\C$ by $d(\C)$. We say that a linear code $\C$ is {\it free of rank} $k$ over $R$ if it is so as an $R$-module. If, in addition, $\C$ is of minimum distance $d$, then we say that it is an $[m,k,d]$-linear code over $R$.

Given a matrix $A\in M_{s\times l}(R)$ and codes $\C_1, \dots, \C_s$ of length $m$ over $R$, define the {\it matrix-product code} $[\C_1 \dots \C_s]\,A$ to be the code over $R$ whose codewords are the $m\times l$ matrices $(c_1 \dots c_s)\,A$, where $c_i\in \C_i$ (as a column-vector) for $i=1, \dots, s$ (see \cite{As} or \cite{BN} for instance). In particular, we denote $[\C_1 \dots C_s]\,I_s$ by $[\C_1 \dots \C_s]$. It should be noted that the matrix-product code $[\C_1 \dots \C_s]\,A$ can also be thought of as a code of length $ml$ over $R$ in an obvious way. The codes $\C_1, \dots, \C_s$ are called the {\it input codes} of the matrix-product code $[\C_1 \dots \C_s]\,A$. It can be easily checked that a matrix-product code is linear over $R$ so longs as all of its input codes are linear over $R$, and this will be our assumption on the input codes throughout the article.
%Letting $d$ be the minimum distance of $[\C_1 \dots \C_s]\,A$, $d_i$ the minimum distance of $\C_i$, and $\d_i$ the minimum  distance of the $R$-module generated by the first $i$ rows of $A$ for $i=1, \dots, s$, we have the following inequality (see \cite{BD}):
%$$d\geq \min\{d_1 \d_1, d_2\d_2, \dots, d_s\d_s\}.$$

\subsection{Hulls and LCD Codes}$\\$
\indent Consider the (Euclidean) bilinear form $$\langle .\, ,.\rangle: R^m \times R^m \to R$$ defined by $$\langle (x_1, \dots, x_m), (y_1, \dots, y_m)\rangle = \sum_{i=1}^m x_i y_i.$$
Let $\C$ be a linear code over $R$ of length $m$. Define the (Euclidean) {\it dual} $\C^\perp$ of $\C$ to be the following linear code:
$$\C^\perp =\{ y\in R^m\,|\, \mbox{$\langle x, y \rangle =0$ for all $x\in \C$}\};$$
that is, $\C^\perp$ consists of the elements of $R^m$ which are orthogonal to all elements of $\C$ with respect to the Euclidean bilinear form. The {\it hull} $H(\C)$ of $\C$ is the linear code $\C \cap \C^\perp$. We call $\C$ a {\it linear complementary dual} (LCD) code over $R$ if its hull is trivial (i.e. $H(\C)=\{0\}$).

\subsection{Contributions of the Article}$\\$
\indent Motivated by the ongoing developments of matrix-product codes and the numerous applications of the hulls of linear codes, we study in Section 2 the hull of a matrix-product code over $R$ and show how it is related to the hulls of its input codes. We then utilize our results to give constructions of LCD matrix-product codes over $R$. Some of our results generalize their counterparts over finite fields which appeared in \cite{LL2}. As an application, we visit in Section 3 the concept of torsion codes over finite chain rings. We use such codes along with results from Section 2 to further construct LCD matrix-product codes over the residue fields of the underlying rings and consider their minimum distances. Several examples are given throughout the article.

\section{{\bf The Hulls of Matrix-Product Codes}}\label{LCD MPC}
%We call a matrix $A\in M_{s\times l}(R)$ quasi-orthogonal if $AA^t$ is equal to a diagonal matrix $\mbox{diag}(r_1, \dots, r_s)$ whose diagonal entries $r_1, \dots, r_s$ are all units in $R$, and we call it anti-quasi-orthogonal if $AA^t$ is equal to an anti-diagonal matrix $\mbox{adiag}(r_1, \dots, r_s)$ whose anti-diagonal entries $r_1, \dots, r_s$ are all units in $R$.

%\begin{dfn}
%The hull of a linear code $C$ over $R$ is the linear code $C\cap C^\perp$, which we denote by $H(C)$. If $H(C)=\{0\}$, we say that $C$ is linear complementary dual, traditionally abbreviated as LCD.
%\end{dfn}

%It is of no surprise that the dual of a free linear code over an arbitrary commutative ring $R$ is not necessarily free. However, when $R$ is finite and commutative, then \cite[Proposition 2.9]{FLL} shows that the dual of a free linear of length $m$ and rank $k$ code is free of rank $m-k$. This fact was used in \cite{BD} to prove the following lemma, which generalizes the case over finite fields (\cite{BN}) and finite chain rings (\cite{As}).
For a non-singular $A\in M_{s\times s}(R)$ and free linear codes $\C_1, \dots, \C_s$ over $R$ of the same length, the equality $$([\C_1 \dots \C_s]\,A)^\perp=[\C_1^\perp \dots \C_s^\perp]\,(A^{-1})^t$$ was shown to hold when $R$ is a finite field (\cite{BN}), a finite chain ring (\cite{As}), a finite commutative ring (\cite{BD}), and, most recently, an arbitrary commutative ring without even requiring the freeness of the input codes (\cite{DB}).

\begin{prop}\label{MPC dual} $($\cite[Theorem 3.3]{DB}$)$
Let $\C_1, \dots, \C_s$ be linear codes over $R$ of the same length and $A\in M_{s\times s}(R)$ non-singular. Then, $([\C_1 \dots \C_s]\,A)^\perp=[\C_1^\perp \dots \C_s^\perp]\,(A^{-1})^t$.
\end{prop}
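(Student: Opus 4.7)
The plan is to verify the two inclusions separately, working throughout with the standard identification of an $m\times s$ matrix $X=(x_1\,\dots\,x_s)$ (columns $x_i$) with the length-$ms$ vector obtained by column-concatenation. Under this identification, the Euclidean bilinear form on $R^{ms}$ becomes $\langle X,Y\rangle=\operatorname{tr}(X^t Y)=\sum_{i=1}^s\langle x_i,y_i\rangle$, so the whole argument reduces to trace manipulations over $R$.

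For the inclusion $[\C_1^\perp\dots\C_s^\perp]\,(A^{-1})^t\subseteq([\C_1\dots\C_s]\,A)^\perp$, I would take an arbitrary $x=CA$ with columns $c_i\in\C_i$ and an arbitrary $y=D(A^{-1})^t$ with columns $d_i\in\C_i^\perp$. Using the cyclic property of trace together with the identity $(A^{-1})^t A^t=I$, the inner product $\langle x,y\rangle=\operatorname{tr}((CA)^t D(A^{-1})^t)$ collapses to $\operatorname{tr}(C^t D)=\sum_{i=1}^s\langle c_i,d_i\rangle$, which vanishes by the choice of the $d_i$.

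For the reverse inclusion, given any $Y\in([\C_1\dots\C_s]\,A)^\perp$, the natural candidate is $D:=YA^t$: this automatically yields $Y=D(A^t)^{-1}=D(A^{-1})^t$, so it suffices to show that the $i$-th column $d_i$ of $D$ lies in $\C_i^\perp$ for each $i$. To check this, I would pair $Y$ against the specific codeword $x=CA$ obtained by placing an arbitrary $c_i\in\C_i$ in the $i$-th column of $C$ and zeros in every other column. A minor variant of the trace computation above specializes $\langle x,Y\rangle$ to $\langle c_i,d_i\rangle$, which must vanish by the hypothesis on $Y$; letting $c_i$ range over $\C_i$ then gives $d_i\in\C_i^\perp$.

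I expect the main obstacle to be not any deep algebraic subtlety but the bookkeeping of the matrix/vector identification: one has to confirm that the ambient bilinear form on $R^{ms}$ is faithfully represented by $\operatorname{tr}(X^t Y)$ and that column-concatenation is used consistently on both sides of the target equality. Once that is pinned down, the proof uses nothing beyond matrix multiplication, transposition, and cyclicity of the trace; in particular it never invokes freeness of the $\C_i$, any cardinality or rank count, or the identity $\C^{\perp\perp}=\C$ (which can fail over a general commutative ring), which is precisely what permits the statement at the full level of generality claimed.
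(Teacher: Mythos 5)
Your argument is correct. The paper itself supplies no proof of this proposition --- it is imported verbatim from \cite[Theorem 3.3]{DB} --- so there is no in-text argument to compare against; judged on its own terms, your trace computation is sound and complete. Both inclusions check out: the identity $\langle X,Y\rangle=\operatorname{tr}(X^tY)=\sum_i\langle x_i,y_i\rangle$ faithfully encodes the Euclidean form on $R^{ms}$ (the order of vectorization is immaterial since the form is just the sum of entrywise products), cyclicity of trace and $(A^{-1})^tA^t=(AA^{-1})^t=I_s$ hold over any commutative ring, and non-singularity of $A$ (determinant a unit) guarantees $A^{-1}$ exists via the adjugate. The reverse inclusion is the part most proofs in the literature handle by dimension or cardinality counting over fields or chain rings; your device of setting $D:=YA^t$ and testing against codewords supported in a single column avoids any rank, freeness, or double-dual argument, which is exactly what is needed for the statement to hold over an arbitrary commutative ring with possibly non-free input codes. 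One small point of bookkeeping worth making explicit if you write this up: the test matrix $C$ with $c_i$ in column $i$ and zeros elsewhere is a legitimate element of $[\C_1\dots\C_s]$ only because each $\C_j$ is linear and hence contains $0$; that hypothesis is in force here, so there is no gap.
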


\begin{lem}\label{2 conditions}
Let $\C_1, \dots, \C_s$ be linear codes over $R$ of the same length and $A\in M_{s\times s}(R)$. If either of the following holds:
\begin{itemize}
\item[1.] $AA^t=\mbox{diag}(r_1, \dots, r_s)$ for $r_1, \dots, r_s\in U(R)$, or
\item[2.] $AA^t=\mbox{adiag}(r_1, \dots, r_s)$ for $r_1, \dots, r_s \in U(R)$, and $\C_i^\perp=\C_{s-i+1}^\perp$ for $i=1, \dots, s$,
\end{itemize}
then $([\C_1 \dots \C_s]\,A)^\perp=[\C_1^\perp \dots \C_s^\perp]\,A$.
\end{lem}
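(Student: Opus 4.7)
The plan is to reduce the statement to Proposition~\ref{MPC dual} and then show that, in each case, the matrix $(A^{-1})^t$ differs from $A$ by a factor whose action on the tuple $[\C_1^\perp \dots \C_s^\perp]$ is trivial. First I would note that in both cases $A$ is non-singular: the introductory subsection on matrices already records that whenever $AA^t$ is diagonal or anti-diagonal with unit entries, both $A$ and $A^t$ are non-singular. Hence Proposition~\ref{MPC dual} applies and yields
$$([\C_1\dots\C_s]\,A)^\perp = [\C_1^\perp\dots\C_s^\perp]\,(A^{-1})^t,$$
so the task reduces to rewriting $(A^{-1})^t$ in a form whose right-multiplication on $[\C_1^\perp\dots\C_s^\perp]$ matches that of $A$.

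For Case~1, setting $D=\mbox{diag}(r_1,\dots,r_s)$, the relation $AA^t=D$ gives $A = D(A^{-1})^t$, hence $(A^{-1})^t = D^{-1}A$. Right-multiplication by a diagonal matrix with unit entries simply rescales each $\C_i^\perp$ by a unit, so it leaves the tuple $[\C_1^\perp\dots\C_s^\perp]$ unchanged as a matrix-product input, and the identity follows at once.

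For Case~2, setting $E=\mbox{adiag}(r_1,\dots,r_s)$, a short direct computation of entries shows that $E^{-1}=\mbox{adiag}(r_s^{-1},\dots,r_1^{-1})$ and consequently $(E^{-1})^t=\mbox{adiag}(r_1^{-1},\dots,r_s^{-1})$; thus $(A^{-1})^t = (E^{-1})^t\,A$. Right-multiplication by an antidiagonal matrix with unit entries reverses the order of the tuple (and rescales each factor by a unit), so it sends $[\C_1^\perp\dots\C_s^\perp]$ to $[\C_s^\perp\dots\C_1^\perp]$. The hypothesis $\C_i^\perp=\C_{s-i+1}^\perp$ is precisely what is needed to restore the original tuple, and the equality follows after right-multiplying by $A$.

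The main obstacle I anticipate is the careful bookkeeping of the antidiagonal inverse/transpose identities in Case~2, where the indices must be tracked to confirm that the permutation induced on the $\C_i^\perp$'s is exactly the one neutralized by the symmetry hypothesis. Apart from that, the argument is a direct appeal to Proposition~\ref{MPC dual} combined with the elementary fact that a linear code over $R$ is invariant under scaling by a unit.
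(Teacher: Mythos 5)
Your proposal is correct and takes essentially the same route as the paper's own proof: both invoke Proposition~\ref{MPC dual}, rewrite $(A^{-1})^t$ as a diagonal (resp.\ antidiagonal) unit matrix times $A$, and absorb the resulting unit scalings (and, in case 2, the reversal of the tuple via $\C_i^\perp=\C_{s-i+1}^\perp$) into the dual input codes. The only cosmetic difference is that you write the antidiagonal factor as $(E^{-1})^t$ where the paper uses $E^{-1}$; these coincide because $AA^t$ is automatically symmetric, so nothing is affected.
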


\begin{proof}\hfill
\begin{itemize}
\item[1.] Assume that $AA^t=\mbox{diag}(r_1, \dots, r_s)$ for $r_1, \dots, r_s\in U(R)$. Then $A$ is non-singular, \linebreak $(A^{-1})^t=\mbox{diag}(r_1^{-1}, \dots,r_s^{-1})A$ and, by Proposition \ref{MPC dual},
    \begin{align*}
    ([\C_1 \dots \C_s]\,A)^\perp &=[\C_1^\perp \dots \C_s^\perp]\,\mbox{diag}(r_1^{-1}, \dots,r_s^{-1})A \\
    &=[r_1^{-1}\C_1^\perp \dots r_s^{-1}\C_s^\perp]\,A. 
    \end{align*}
    Since the codes $\C_i$ are linear over $R$ and $r_i^{-1}$ are units in $R$, $r_i^{-1}\C_i^\perp=\C_i^\perp$ for $i=1, \dots, s$ and, thus, the desired conclusion follows.
\item[2.] Assume that $AA^t=\mbox{adiag}(r_1, \dots, r_s)$ for $r_1, \dots, r_s \in U(R)$, and $\C_i^\perp=\C_{s-i+1}^\perp$ for $i=1, \dots, s$. Then $A$ is non-singular, $(A^{-1})^t=\mbox{adiag}(r_s^{-1}, \dots, r_1^{-1})A$ and, by Proposition \ref{MPC dual},
\begin{align*}
([\C_1 \dots \C_s]\,A)^\perp &= [\C_1^\perp \dots \C_s^\perp]\,\mbox{adiag}(r_s^{-1}, \dots,r_1^{-1})A \\ 
&= [r_1^{-1}\C_s^\perp \dots r_s^{-1}\C_1^\perp]\,A \\
&= [r_1^{-1}\C_1^\perp \dots r_s^{-1}\C_s^\perp]\,A.
\end{align*}
Since the codes $\C_i$ are linear over $R$ and $r_i^{-1}$ are units in $R$, $r_i^{-1}\C_i^\perp=\C_i^\perp$ for $i=1, \dots, s$ and, thus, the desired conclusion follows.
\end{itemize}
\end{proof}

\begin{lem}\label{4 conditions} $($\cite[Lemma 3.6]{DB}$)$
Let $\C_1, \dots, \C_s$ be linear codes over $R$ of the same length and \linebreak $A\in M_{s\times s}(R)$ non-singular. If either of the following holds:
\begin{itemize}
\item[1.] $\C_1 \subseteq \C_2 \subseteq \dots \subseteq \C_s$ and $A$ is upper triangular,
\item[2.] $\C_s \subseteq \C_{s-1} \subseteq \dots \subseteq \C_1$ and $A$ is lower triangular,
\item[3.] $A$ is diagonal, or
\item[4.] $\C_1 = \C_2 = \dots = \C_s$,
\end{itemize}
then $[\C_1 \dots \C_s]\,A=[\C_1 \dots \C_s]$.
\end{lem}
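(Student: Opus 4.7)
The plan is to prove set equality $[\C_1 \dots \C_s]\,A = [\C_1 \dots \C_s]$ in each of the four cases by working column-by-column. Writing $A=(a_{ij})$ and recalling that a codeword $(c_1 \dots c_s)\,A$ is an $m\times s$ matrix whose $j$-th column is $\sum_{i=1}^s a_{ij}c_i$, the inclusion $[\C_1 \dots \C_s]\,A \subseteq [\C_1 \dots \C_s]$ reduces to checking that, in each case, this sum lies in $\C_j$ for every $j$. For the reverse inclusion, given $(d_1,\dots,d_s)$ with $d_j\in \C_j$, I would set $(c_1 \dots c_s) = (d_1 \dots d_s)\,A^{-1}$ (using non-singularity of $A$) and verify in each case that the columns of this product are again in the correct $\C_j$.

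For cases 3 and 4 the argument is essentially immediate. If $A=\mathrm{diag}(a_1,\dots,a_s)$, non-singularity forces $a_j\in U(R)$, so the $j$-th column of $(c_1 \dots c_s)\,A$ is $a_j c_j\in \C_j$ (and conversely $c_j=a_j^{-1}d_j \in \C_j$ since $\C_j$ is $R$-linear), giving case 3. For case 4, with $\C:=\C_1=\cdots=\C_s$, any $R$-linear combination of elements of $\C$ lies in $\C$, so both $(c_1\dots c_s)\,A$ and $(d_1\dots d_s)\,A^{-1}$ have every column in $\C$.

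For cases 1 and 2 I would use the key fact that the inverse of a non-singular upper (respectively lower) triangular matrix over $R$ is again upper (respectively lower) triangular; this follows directly from the adjugate formula, which remains valid over any commutative ring (\cite{Mc}). In case 1, upper triangularity of $A$ makes the $j$-th column of $(c_1\dots c_s)\,A$ equal to $\sum_{i=1}^{j} a_{ij}c_i$, and the chain $\C_1\subseteq \cdots \subseteq \C_s$ places each $c_i$ ($i\le j$) inside $\C_j$, so the sum lies in $\C_j$. Applying the same reasoning to $A^{-1}$ (also upper triangular) yields the reverse inclusion. Case 2 is handled symmetrically, with $A$ and $A^{-1}$ both lower triangular and the reversed chain $\C_s\subseteq\cdots\subseteq\C_1$ ensuring that $\sum_{i=j}^{s} a_{ij}c_i \in \C_j$.

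There is no real obstacle here beyond bookkeeping; the only substantive input is the stability of triangularity under inversion over a commutative ring, and the preservation of linear codes under unit scaling and $R$-linear combinations. The argument uses neither Proposition \ref{MPC dual} nor freeness of the input codes, so it applies in the generality assumed.
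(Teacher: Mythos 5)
Your argument is correct: the column-by-column reduction, the fact that a non-singular triangular matrix over a commutative ring has a triangular inverse (via the adjugate), and the observation that non-singularity of a triangular or diagonal matrix forces its diagonal entries to be units are exactly the ingredients needed, and each of the four cases goes through as you describe. The paper itself gives no proof of this lemma, importing it from \cite[Lemma 3.6]{DB}, and your write-up is the standard direct argument one would expect to find there, so there is nothing to flag.
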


The following main result and the subsequent corollaries present conditions under which the hull of a matrix-product code is given in terms of the hulls of its input codes.

\begin{thm}\label{LCD thm}
Let $\C_1, \dots, \C_s$ be linear codes over $R$ of the same length \textcolor[rgb]{0.00,0.07,1.00}{and $A\in M_{s\times l}(R)$.}
\begin{itemize}
\item[(1)] If $([\C_1 \dots \C_s]\,A)^\perp=[\C_1^\perp \dots \C_s^\perp]\,A$ and $A$ is either FRR or right-invertible, then $$H([\C_1 \dots \C_s]\,A)=[H(\C_1) \dots H(\C_s)]\,A.$$
\item[(2)] If $[\C_1 \dots \C_s]\,A=[\C_1 \dots \C_s]$, then $$H([\C_1 \dots \C_s]\,A)=[H(\C_1) \dots H(\C_s)].$$
\end{itemize}
\end{thm}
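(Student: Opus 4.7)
The proof splits naturally into the two parts, both of which will rest on the same basic identity
\[
H(\mathcal{D})=\mathcal{D}\cap \mathcal{D}^{\perp}
\]
applied to $\mathcal{D}=[\mathcal{C}_1\dots\mathcal{C}_s]\,A$, together with the fact, coming from Proposition~\ref{MPC dual} (with $A=I_s$), that for direct products one has $([\mathcal{C}_1\dots\mathcal{C}_s])^{\perp}=[\mathcal{C}_1^\perp\dots\mathcal{C}_s^\perp]$.

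For part (1), the idea is to substitute the hypothesis $([\mathcal{C}_1\dots\mathcal{C}_s]\,A)^{\perp}=[\mathcal{C}_1^\perp\dots\mathcal{C}_s^\perp]\,A$ into the definition of the hull, obtaining
\[
H([\mathcal{C}_1\dots\mathcal{C}_s]\,A)=[\mathcal{C}_1\dots\mathcal{C}_s]\,A\;\cap\;[\mathcal{C}_1^\perp\dots\mathcal{C}_s^\perp]\,A.
\]
The inclusion $\supseteq$ of the desired equality is then immediate: any tuple $(c_1\,\dots\,c_s)$ with $c_i\in H(\mathcal{C}_i)=\mathcal{C}_i\cap\mathcal{C}_i^{\perp}$ produces a codeword $(c_1\,\dots\,c_s)A$ lying in both of the intersected codes. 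The substantive inclusion $\subseteq$ will come from uniqueness of the representation of a codeword: if $(c_1\,\dots\,c_s)A=(c'_1\,\dots\,c'_s)A$ with $c_i\in\mathcal{C}_i$ and $c'_i\in\mathcal{C}_i^\perp$, we want to conclude $c_i=c'_i\in H(\mathcal{C}_i)$. Setting $D=(c_1-c'_1,\dots,c_s-c'_s)\in M_{m\times s}(R)$, we have $DA=0$. If $A$ is right-invertible with right inverse $B\in M_{l\times s}(R)$, then $D=DAB=0$ immediately. If $A$ is FRR, then each row of $DA=0$ is an $R$-linear relation among the rows of $A$, which by definition of FRR forces the corresponding row of $D$ to vanish. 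Either way, $c_i=c'_i$ for every $i$, which gives the reverse inclusion.

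Part (2) is essentially the special case where the matrix-product code collapses to a direct product. Under the hypothesis $[\mathcal{C}_1\dots\mathcal{C}_s]\,A=[\mathcal{C}_1\dots\mathcal{C}_s]$, we have
\[
H([\mathcal{C}_1\dots\mathcal{C}_s]\,A)=H([\mathcal{C}_1\dots\mathcal{C}_s])=[\mathcal{C}_1\dots\mathcal{C}_s]\,\cap\,[\mathcal{C}_1^\perp\dots\mathcal{C}_s^\perp],
\]
where the second equality uses the direct-product dual identity noted above. Since $[\mathcal{C}_1\dots\mathcal{C}_s]$ is just the external direct product of the $\mathcal{C}_i$, intersection distributes componentwise, giving $[\mathcal{C}_1\cap\mathcal{C}_1^\perp\,\dots\,\mathcal{C}_s\cap\mathcal{C}_s^\perp]=[H(\mathcal{C}_1)\dots H(\mathcal{C}_s)]$, as required.

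The only subtle point is the $\subseteq$ direction in part (1), where a codeword of $[\mathcal{C}_1\dots\mathcal{C}_s]\,A$ could a priori admit two different representations, one via tuples in the $\mathcal{C}_i$ and another via tuples in the $\mathcal{C}_i^\perp$. The FRR/right-invertibility hypothesis is precisely what rules this out, and recognizing that this is the one place where the hypothesis is used is the main observation. Everything else is bookkeeping and the direct-product computation of part (2).
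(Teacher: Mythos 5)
Your proposal is correct and follows essentially the same route as the paper: substitute the dual identity into $H(\mathcal{D})=\mathcal{D}\cap\mathcal{D}^\perp$, prove $\supseteq$ by direct membership, prove $\subseteq$ via uniqueness of the representation $(c_1\dots c_s)A$ under the FRR/right-invertibility hypothesis, and reduce part (2) to the componentwise intersection of direct products via Proposition~\ref{MPC dual} with $A=I_s$. The only difference is that you spell out the cancellation step ($DA=0$ forces $D=0$) that the paper dismisses as obvious, which is a welcome addition rather than a deviation.
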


\begin{proof}\hfill
\begin{itemize}
\item[(1)] Assume that $([\C_1 \dots \C_s]\,A)^\perp=[\C_1^\perp \dots \C_s^\perp]\,A$. Let $x \in H([\C_1 \dots \C_s]\,A)$. Then $x\in [\C_1 \dots \C_s]\,A$ and $x\in [\C_1^\perp \dots \C_s^\perp]\,A$. So, $x=(c_1 \dots c_s)\,A=(c_1' \dots c_s')\,A$ for some $c_i\in \C_i$, $c_i'\in \C_i^\perp$, $i=1, \dots, s$. If $A$ is right-invertible or the $s$ rows of $A$ are linearly independent over $R$, then we obviously get $c_i=c_i'\in H(\C_i)$ for every $i=1, \dots, s$, and so $x\in [H(\C_1) \dots H(\C_s)]\,A$. Thus, $H([\C_1 \dots \C_s]\,A) \subseteq [H(\C_1) \dots H(\C_s)]\,A$. Conversely, let $y\in [H(\C_1) \dots H(\C_s)]\,A$. So, $y=(y_1, \dots y_s)\,A$ for some $y_i\in H(\C_i)$, $i=1, \dots, s$. It follows that $y\in [\C_1 \dots, \C_s]\,A \cap [\C_1^\perp \dots \C_s^\perp]\,A=H([\C_1 \dots \C_s]\,A)$. Thus, $[H(\C_1) \dots H(\C_s)]\,A \subseteq H([\C_1 \dots \C_s]\,A)$.
\item[(2)] Assume that $[\C_1 \dots \C_s]\,A=[\C_1 \dots \C_s]$. It follows from Proposition \ref{MPC dual} that 
    $$([\C_1 \dots \C_s]\,A)^\perp=([\C_1 \dots \C_s])^\perp= ([\C_1 \dots \C_s]\,I_s)^\perp=[\C_1^\perp \dots \C_s^\perp]\, (I_s^{-1})^t=[\C_1^\perp \dots \C_s^\perp].$$ Thus, $H([\C_1 \dots \C_s]\,A)=[\C_1 \dots \C_s] \cap [\C_1^\perp \dots \C_s^\perp]=[H(\C_1) \dots H(\C_s)]$.
\end{itemize}
\end{proof}

\begin{remark} In Theorem \ref{LCD thm}, if $R$ is finite then the two properties FRR and right-invertibility of $A$ are equivalent (see \cite[Corollary 2.7]{FLL}). Furthermore, if $A$ is square, then the properties FRR, right-invertibility, invertibility, and non-singularity are all equivalent properties of $A$ (see \cite[Corollary 2.8]{FLL}).
\end{remark}

\begin{cor}\label{hull 1}
Let $\C_1, \dots, \C_s$ be linear codes over $R$ of the same length and $A\in M_{s\times s}(R)$ non-singular.
\begin{itemize}
\item[1.] If any of the two conditions of Lemma \ref{2 conditions} holds, then $H([\C_1 \dots \C_s]\,A)=[H(\C_1) \dots H(\C_s)]\,A$. 
\item[2.] If any of the four conditions of Lemma \ref{4 conditions} holds, then $H([\C_1 \dots \C_s]\,A)=[H(\C_1) \dots H(\C_s)]$.
\end{itemize}
\end{cor}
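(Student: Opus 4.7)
The plan is to derive both parts of the corollary as direct consequences of Theorem \ref{LCD thm}, using Lemmas \ref{2 conditions} and \ref{4 conditions} to supply the hypotheses required by the two cases of that theorem. Essentially all the real work has already been done; the corollary is just a packaging step. Before doing anything else, I would note that, since $A\in M_{s\times s}(R)$ is square, the assumption of non-singularity (i.e., $\det A\in U(R)$) makes $A$ invertible over $R$, and in particular right-invertible. This observation is needed to feed into part (1) of Theorem \ref{LCD thm}.

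For part 1, I would invoke Lemma \ref{2 conditions}: under either of its two hypotheses we have the identity
\[
([\C_1 \dots \C_s]\,A)^\perp=[\C_1^\perp \dots \C_s^\perp]\,A.
\]
Combined with the right-invertibility of $A$ just observed, this is exactly the hypothesis of part (1) of Theorem \ref{LCD thm}, which yields $H([\C_1 \dots \C_s]\,A)=[H(\C_1) \dots H(\C_s)]\,A$.

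For part 2, I would invoke Lemma \ref{4 conditions}: under any of its four hypotheses we have $[\C_1 \dots \C_s]\,A=[\C_1 \dots \C_s]$, which is exactly the hypothesis of part (2) of Theorem \ref{LCD thm}, giving $H([\C_1 \dots \C_s]\,A)=[H(\C_1) \dots H(\C_s)]$.

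There is no real obstacle here; the only minor subtlety is making sure that non-singularity of the square matrix $A$ over a general commutative ring $R$ actually entails right-invertibility, as needed to apply part (1) of Theorem \ref{LCD thm}. This follows from the facts recalled in the Matrices subsection of the introduction (square right-invertible $=$ invertible, and $\det A\in U(R)$ implies invertibility via the classical adjugate formula valid over any commutative ring). Once this is on record, the proof of the corollary is a two-line bookkeeping argument for each of its two parts.
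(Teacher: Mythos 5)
Your proposal is correct and matches the paper's own proof, which likewise derives the corollary directly from Theorem \ref{LCD thm} combined with Lemmas \ref{2 conditions} and \ref{4 conditions}. Your extra remark that non-singularity of the square matrix $A$ entails right-invertibility (needed to invoke part (1) of the theorem) is a point the paper leaves implicit but is handled correctly here.
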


\begin{proof}
This is a direct application of Theorem \ref{LCD thm} along with Lemmas \ref{2 conditions} and \ref{4 conditions}.
\end{proof}

As a consequence of Theorem \ref{LCD thm} and Corollary \ref{hull 1}, the following result gives several sufficient conditions to characterize LCD matrix-product codes over commutative rings in terms of properties of their input codes and the matrix used.

\begin{cor}\label{LCD cor 1}
Let $\C_1, \dots, \C_s$ be linear codes over $R$ of the same length. Suppose that one of the following holds:
\begin{itemize}
\item[1.] $A\in M_{s\times l}(R)$, $([\C_1 \dots \C_s]\,A)^\perp=[\C_1^\perp \dots \C_s^\perp]\,A$, and $A$ is either FRR or right-invertible.
\item[2.] $A\in M_{s\times l}(R)$ and $[\C_1 \dots \C_s]\,A=[\C_1 \dots \C_s]$.
\item[3.] $A\in M_{s\times s}(R)$ and $AA^t=\mbox{diag}(r_1, \dots, r_s)$ with $r_i\in U(R)$, $i=1, \dots, s$.
\item[4.] $A\in M_{s\times l}(R)$, $AA^t=\mbox{adiag}(r_1, \dots, r_s)$ with $r_i \in U(R)$ and $\C_i^\perp=\C_{s-i+1}^\perp$, $i=1, \dots, s$.
\item[5.] $A\in M_{s\times s}(R)$ is non-singular upper triangular and $\C_1 \subseteq \C_2 \subseteq \dots \subseteq \C_s$.
\item[6.] $A\in M_{s\times s}(R)$ is non-singular lower triangular and $\C_s \subseteq \C_{s-1} \subseteq \dots \subseteq \C_1$.
\item[7.] $A\in M_{s\times s}(R)$ is non-singular and $\C_1 = \C_2 = \dots = \C_s$.
\end{itemize}
Then $[\C_1 \dots \C_s]\,A$ is LCD if and only if $\C_i$ is LCD for every $i=1, \dots, s$.
\end{cor}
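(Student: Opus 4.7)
My plan is to observe that Corollary \ref{hull 1} and Theorem \ref{LCD thm} already do essentially all the work: conditions 3 and 4 of the corollary are precisely the two hypotheses of Lemma \ref{2 conditions} and imply (via Corollary \ref{hull 1}(1)) the identity $H([\C_1 \dots \C_s]\,A)=[H(\C_1) \dots H(\C_s)]\,A$, while conditions 5, 6, 7 are the three nontrivial hypotheses of Lemma \ref{4 conditions} and yield (via Corollary \ref{hull 1}(2)) the identity $H([\C_1 \dots \C_s]\,A)=[H(\C_1) \dots H(\C_s)]$. Conditions 1 and 2 are exactly the hypotheses of Theorem \ref{LCD thm}(1) and (2), which give the same two types of identities, respectively. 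So the proof reduces to two uniform cases.

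Next, I would invoke the definition of LCD: $[\C_1 \dots \C_s]\,A$ is LCD iff $H([\C_1 \dots \C_s]\,A)=\{0\}$. Thus, in each of the seven cases, the job is to show that the resulting expression for the hull, $[H(\C_1) \dots H(\C_s)]\,A$ or $[H(\C_1) \dots H(\C_s)]$, is $\{0\}$ iff every $H(\C_i)=\{0\}$. The forward implication ($\C_i$ LCD for all $i$ implies the matrix-product code is LCD) is trivial in both forms, since taking $H(\C_i)=\{0\}$ makes every codeword of shape $(h_1 \dots h_s)\,A$ or $(h_1 \dots h_s)\,I_s$ equal to the zero matrix.

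For the converse, in the cases where $H([\C_1 \dots \C_s]\,A) = [H(\C_1) \dots H(\C_s)]$ (that is, conditions 2, 5, 6, 7), it is immediate: every choice of $h_i \in H(\C_i)$ produces the zero $m \times s$ matrix $(h_1 \dots h_s)$, forcing each $h_i = 0$ and hence $H(\C_i) = \{0\}$. In the other cases (conditions 1, 3, 4), the converse is where the hypothesis on $A$ is actually used. Assuming $[H(\C_1) \dots H(\C_s)]\,A = \{0\}$, I would fix arbitrary $h_i \in H(\C_i)$ and conclude $(h_1 \dots h_s)\,A = 0$. If $A$ is right-invertible with right inverse $B$, post-multiplying by $B$ gives $(h_1 \dots h_s) = 0$, whence each $h_i = 0$. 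If $A$ is FRR, the equation $(h_1 \dots h_s)\,A = 0$ says that for every row index $j$ the $R$-linear combination $\sum_i h_{i,j}\,(\text{row}_i A)$ vanishes; linear independence of the rows of $A$ then forces every $h_{i,j}=0$, i.e. $h_i=0$.

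The only step requiring care is this last row-by-row argument for the FRR case, and it would be the sole obstacle worth mentioning; everything else is formal bookkeeping across the seven cases. I would close by noting, as the Remark following Theorem \ref{LCD thm} already records, that when $R$ is finite the FRR and right-invertibility hypotheses coincide (and both coincide with non-singularity when $A$ is square), so that cases 3 and 4, which are stated with a square $A$, are consistent with the framework of Theorem \ref{LCD thm}(1).
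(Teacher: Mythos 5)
Your proposal is correct and follows essentially the same route as the paper: reduce all seven cases via Theorem \ref{LCD thm} and Corollary \ref{hull 1} to the two hull identities $H([\C_1 \dots \C_s]\,A)=[H(\C_1) \dots H(\C_s)]\,A$ (cases 1, 3, 4) or $=[H(\C_1) \dots H(\C_s)]$ (cases 2, 5, 6, 7), and then use FRR/right-invertibility/non-singularity of $A$ to see that the former vanishes iff each $H(\C_i)$ does. You merely spell out the cancellation step (post-multiplying by a right inverse, or the row-by-row linear-independence argument) that the paper leaves implicit.
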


\begin{proof}
It follows from Theorem \ref{LCD thm} and Corollary \ref{hull 1} that if any of the conditions 1, 3, or 4 occurs, then $H([\C_1 \dots \C_s]\,A)=[H(\C_1) \dots H(\C_s)]\,A$. In these cases, since $A$ is either FRR or right-invertible (in condition 1) and non-singular (in conditions 3 and 4), it follows that $H([\C_1 \dots \C_s]\,A)=\{0\}$ if and only if $H(\C_i)=\{0\}$ for every $i=1, \dots, s$ as desired. As for the remaining cases, we have, by Theorem \ref{LCD thm} and Corollary \ref{hull 1} again, the equality $H([\C_1 \dots \C_s]\,A)=[H(\C_1) \dots H(\C_s)]$, from which the desired conclusion is obvious.
\end{proof}

\begin{remark}
Condition 3 (resp. condition 6) of Corollary \ref{LCD cor 1} generalizes \cite[Theorem 3.1]{LL2} (resp. \cite[Theorem 3.2]{LL2}).
\end{remark}

\begin{example}
Let $\C_1=15 \Z_{30} \times 15\Z_{30}$, and $\C_2=10\Z_{30}\times 10\Z_{30}$. So, $\C_1^\perp=2\Z_{30}\times 2\Z_{30}$ and $\C_2^\perp=3\Z_{30} \times 3\Z_{30}$. It is then clear that both $\C_1$ and $\C_2$ are LCD codes over $\Z_{30}$. Let $A=\left(\begin{array}{cc}6&5\\5&6\end{array}\right)\in M_{2\times 2}(\Z_{30})$. Since $AA^t= \left(\begin{array}{cc}1&0\\0&1\end{array}\right)$, it follows from Corollary \ref{LCD cor 1} that $[\C_1\; \C_1]\,A$, $[\C_2 \; \C_2]\,A$, $[\C_1 \; \C_2]\,A$, and $[\C_2 \; \C_1]\,A$ are all LCD codes. For the sake of illustration, let us consider $[\C_1 \; \C_2]\,A$. As \linebreak $[\C_1 \; \C_2]=\left(\begin{array}{cc}15\Z_{30}&10\Z_{30}\\15\Z_{30}&10\Z_{30}\end{array}\right)$, it can be easily checked that
$$[\C_1 \C_2]\,A=\left(\begin{array}{cc}15\Z_{30}&10\Z_{30}\\15\Z_{30}&10\Z_{30}\end{array}\right)\,
\left(\begin{array}{cc}6&5\\5&6\end{array}\right)=
\left(\begin{array}{cc}10\Z_{30}&15\Z_{30}\\10\Z_{30}&15\Z_{30}\end{array}\right)=[\C_2 \C_1]=[\C_2 \C_1]\left(\begin{array}{cc}1&0\\0&1\end{array}\right).$$
Since $\left(\begin{array}{cc}1&0\\0&1\end{array}\right)$ is diagonal, it follows from Corollary \ref{LCD cor 1} that $[\C_2 \C_1]\left(\begin{array}{cc}1&0\\0&1\end{array}\right)$, and hence $[\C_1 \C_2]\,A$, is LCD.
\end{example}

\begin{example}
Let $u\in \Z_{25}$ be such that $u^2=-1$ (e.g. $u=7$). Then, the matrix $A=\left(\begin{array}{cc} 1&u\\u&1 \end{array}\right)$ is non-singular and satisfies $AA^t=\mbox{adiag}(2u,2u)$. In fact, $A$ is non-singular by columns (see \cite{BN} for more on this notion). It can be checked that $x^{12}-1$ factors into irreducible factors over $\Z_{25}$ as follows:
$$x^{12}-1=(x+1)(x-1)(x+7)(x-7)(x^2+x+1)(x^2+7x-1)(x^2-7x-1)(x^2-x+1).$$
Obviously, each irreducible factor of $x^{12}-1$ generates a cyclic code of length 12 over $\Z_{25}$. Using the LCD test (namely, $GG^t\in U(\Z_{25})$ where $G$ is a generating matrix of the code, see \cite[Theorem 3.5]{LL1}), it can be shown that, out of these cyclic codes, only $\C_1=\langle x+1 \rangle$ , $\C_2=\langle x^2+x+1 \rangle$, and $\C_3=\langle x^2-x+1 \rangle$ are LCD codes. Moreover, the parameters of $\C_1$, $\C_2$, and $\C_3$ are $[12, 11, 2]$, $[12, 10, 2]$, and $[12, 10, 2]$ respectively. It then follows from Corollary \ref{LCD cor 1} that $[\C_1 \; \C_1]\,A$, $[\C_2 \; \C_2]\,A$, and $[\C_3 \; \C_3]\,A$ are LCD codes. Their parameters are, respectively, $[24, 22, 2]$, $[24, 20, 2]$, and $[24, 20, 2]$ (see \cite{BD} for a bound on the minimum distance of a matrix-product code over any commutative ring). Note that in each of the obtained LCD codes, the length is doubled, the number of codewords is increased, and the information rate $k/n$ and minimum distance are maintained.
\end{example}

\begin{dfn} (\cite{FLL})
Let $A\in M_{s\times l}(R)$, $1\leq s_1, s_2 \leq s-1$ with $s_1 +s_2=s$, $A_1\in M_{s_1\times l}(R)$ is the matrix whose rows ar the upper $s_1$ rows of $A$, and $A_2\in M_{s_2\times l}(R)$ is the matrix whose rows are the lower $s_2$ rows of $A$. We say that $A$ has the {\it $s_1$-partitioned orthogonal property} if every row of $A_1$ is orthogonal to every row of $A_2$. With $A$ as such, we write $A=\left( \begin{array}{c} A_1 \\ \hline  A_2 \end{array}\right)$ and call the ordered pair $(A_1, A_2)$ the {\it $s_1$-orthogonal-property blocks} of $A$.
\end{dfn}

\begin{thm}\label{orth-property thm}
Let $\C_1$ and $\C_2$ be two linear codes of the same length over $R$, and assume that $A\in M_{s\times s}(R)$ is non-singular and has the $s_1$-partitioned orthogonal property. Then,
$$H([\underbrace{\C_1 \dots \C_1}_{s_1} \underbrace{\C_2 \dots \C_2}_{s_2}]\,A) \subseteq [\underbrace{H(\C_1) \dots H(\C_1)}_{s_1} \underbrace{H(\C_2) \dots H(\C_2)}_{s_2}]\,(A^{-1})^t.$$

%If $\C_1$ and $\C_2$ are LCD codes, then so is $[\underbrace{\C_1 \dots \C_1}_{\mbox{\SMALL $s_1$ times}} \underbrace{\C_2 \dots \C_2}_{\mbox{\SMALL $s_2$ times}}]\,A$.
\end{thm}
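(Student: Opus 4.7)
The plan is to start from the two defining containments of the hull and combine them via Proposition \ref{MPC dual}. Writing $\mathcal{D} = [\C_1 \dots \C_1 \C_2 \dots \C_2]\,A$ with the multiplicities $s_1$ and $s_2$, non-singularity of $A$ gives $\mathcal{D}^\perp = [\C_1^\perp \dots \C_1^\perp \C_2^\perp \dots \C_2^\perp]\,(A^{-1})^t$. So an element $x \in H(\mathcal{D})$ can be written simultaneously as $x = \beta\,A$ with $\beta = (c_1,\dots,c_{s_1},d_1,\dots,d_{s_2})$, $c_i\in\C_1$, $d_j\in\C_2$, and as $x = \beta'\,(A^{-1})^t$ with $\beta' = (c_1',\dots,c_{s_1}',d_1',\dots,d_{s_2}')$, $c_i'\in\C_1^\perp$, $d_j'\in\C_2^\perp$.

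Equating the two descriptions and right-multiplying by $A^t$ yields the key identity $\beta' = \beta\,AA^t$. Here the partitioned orthogonal hypothesis is used: writing $A=\left(\begin{array}{c} A_1 \\ \hline A_2\end{array}\right)$, the condition $A_1 A_2^t = 0$ (and its transpose) forces the block structure
\[ AA^t = \left( \begin{array}{cc} A_1 A_1^t & 0 \\ 0 & A_2 A_2^t \end{array} \right). \]
Consequently the identity $\beta' = \beta\,AA^t$ splits as $(c_1',\dots,c_{s_1}') = (c_1,\dots,c_{s_1})\,A_1A_1^t$ and $(d_1',\dots,d_{s_2}') = (d_1,\dots,d_{s_2})\,A_2A_2^t$.

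From here the conclusion is immediate: each $c_i'$ is an $R$-linear combination of $c_1,\dots,c_{s_1}\in\C_1$, hence lies in $\C_1$ by linearity; but $c_i'$ was already in $\C_1^\perp$ by construction, so $c_i' \in \C_1\cap\C_1^\perp = H(\C_1)$. The analogous argument gives $d_j' \in H(\C_2)$. Therefore $\beta' \in H(\C_1)^{s_1}\times H(\C_2)^{s_2}$ (viewed as a row of column vectors), and
\[ x = \beta'\,(A^{-1})^t \in [\underbrace{H(\C_1)\dots H(\C_1)}_{s_1}\underbrace{H(\C_2)\dots H(\C_2)}_{s_2}]\,(A^{-1})^t, \]
which is the desired containment.

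There is no real obstacle; the only subtle point is the bookkeeping between the two representations of $x$ and the observation that one should eliminate $(A^{-1})^t$ by right-multiplication to make the block-diagonal structure of $AA^t$ do the work. The absence of any reverse containment in the statement reflects that $\beta$ itself need not have entries in the individual hulls, which is why the theorem gives only an inclusion rather than an equality.
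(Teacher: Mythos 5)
Your proposal is correct and follows essentially the same route as the paper's proof: express an element of the hull both as $\beta\,A$ and as $\beta'\,(A^{-1})^t$ via Proposition \ref{MPC dual}, right-multiply by $A^t$ to get $\beta'=\beta\,AA^t$, and use the block-diagonal form of $AA^t$ coming from the partitioned orthogonal property to place the entries of $\beta'$ in the respective hulls. Your closing remark about why only an inclusion (not an equality) is obtained is a sensible observation that the paper does not spell out.
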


\begin{proof}
Let $(A_1, A_2)$ be the $s_1$-orthogonal-property blocks of $A$. So we have
$$AA^t=\left( \begin{array}{c} A_1 \\ \hline  A_2 \end{array}\right)\, \left(\begin{array}{c} A_1^t | A_2^t \end{array}\right)=\left(\begin{array}{cc} A_1 A_1^t & A_1 A_2^t \\ A_2 A_1^t & A_2 A_2^t \end{array}\right)=\left(\begin{array}{cc} A_1 A_1^t & 0\\ 0 & A_2 A_2^t\end{array}\right).$$
Let $z\in H([\underbrace{\C_1 \dots \C_1}_{s_1} \underbrace{\C_2 \dots \C_2}_{s_2}]\,A)$. Then, by Proposition \ref{MPC dual}, $$z=(x_1 \dots x_{s_1} y_1 \dots y_{s_2})\,A=(x_1' \dots x_{s_1}' y_1' \dots y_{s_2}')\,(A^{-1})^t,$$ for some $x_i\in \C_1, y_j\in \C_2, x_i'\in \C_1^\perp, y_j'\in \C_2^\perp$, $i=1, \dots, s_1$, $j=1, \dots, s_2$. It now follows that
$$(x_1 \dots x_{s_1} y_1 \dots y_{s_2})\,AA^t=(x_1' \dots x_{s_1}' y_1' \dots y_{s_2}').$$
Set $(x_1 \dots x_{s_1})\,A_1 A_1^t = (a_1 \dots a_{s_1})$ and $(y_1 \dots y_{s_2})\, A_2 A_2^t=(b_1 \dots b_{s_2})$. Since $\C_1$ and $\C_2$ are linear, we get $a_i\in \C_1$ and $b_j\in \C_2$ for $i=1, \dots, s_1$, $j=1, \dots, s_2$. So, we have $$(a_1 \dots a_{s_1} b_1 \dots b_{s_2})= (x_1' \dots x_{s_1}' y_1' \dots y_{s_2}').$$ Thus, $x_i' = a_i \in H(\C_1)$ for $i=1, \dots, s_1$ and $y_j'=b_j \in H(\C_2)$ for $j=1, \dots, s_2$. Hence, $$z\in [\underbrace{H(\C_1) \dots H(\C_1)}_{s_1} \underbrace{H(\C_2) \dots H(\C_2)}_{s_2}]\,(A^{-1})^t$$ and, therefore, $H([\underbrace{\C_1 \dots \C_1}_{s_1} \underbrace{\C_2 \dots \C_2}_{s_2}]\,A) \subseteq [\underbrace{H(\C_1) \dots H(\C_1)}_{s_1} \underbrace{H(\C_2) \dots H(\C_2)}_{s_2}]\,(A^{-1})^t$.
\end{proof}

\begin{cor}\label{orth-property cor}
Keep the assumptions of Theorem \ref{orth-property thm}. If $\C_1$ and $\C_2$ are both LCD codes, then so is $[\underbrace{\C_1 \dots \C_1}_{s_1} \underbrace{\C_2 \dots \C_2}_{s_2}]\,A$.
\end{cor}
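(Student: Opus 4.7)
The plan is to derive the corollary as a nearly immediate consequence of Theorem~\ref{orth-property thm}. First I would note that, since $\C_1$ and $\C_2$ are LCD by hypothesis, we have $H(\C_1)=\{0\}$ and $H(\C_2)=\{0\}$ by the definition of an LCD code recalled in the introduction.

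Next I would feed this information into the inclusion provided by Theorem~\ref{orth-property thm}. The right-hand side of that inclusion is the matrix-product code $[\underbrace{H(\C_1) \dots H(\C_1)}_{s_1} \underbrace{H(\C_2) \dots H(\C_2)}_{s_2}]\,(A^{-1})^t$. With all input codes equal to $\{0\}$, every codeword of this matrix-product code is of the form $(0 \dots 0)(A^{-1})^t=0$, so the right-hand side collapses to the trivial code $\{0\}$. The inclusion then forces
$$H\bigl([\underbrace{\C_1 \dots \C_1}_{s_1} \underbrace{\C_2 \dots \C_2}_{s_2}]\,A\bigr) \subseteq \{0\},$$
and since the hull always contains $0$, equality holds. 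By definition this means the matrix-product code is LCD, which is the desired conclusion.

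The main (and only) ingredient is Theorem~\ref{orth-property thm}; no additional obstacle is expected, since the non-singularity of $A$ and the $s_1$-partitioned orthogonal property have already been absorbed into the hypotheses we are inheriting. The argument is thus a one-line specialization of the preceding theorem to the LCD setting.
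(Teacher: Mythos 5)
Your proposal is correct and is exactly what the paper means by its one-line proof (``a direct application of Theorem \ref{orth-property thm}''): with $H(\C_1)=H(\C_2)=\{0\}$ the right-hand side of the inclusion collapses to $\{0\}$, forcing the hull of the matrix-product code to be trivial. No difference in approach; you have merely written out the specialization explicitly.
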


\begin{proof}
A direct application of Theorem \ref{orth-property thm}.
\end{proof}

\begin{example}
Assume that $R$ is of characteristic 2. It is clear that $A=\left(\begin{array}{ccc} 1 & 0 & 1\\ 0 & 1 & 1\\ 1 & 1 & 1 \end{array}\right)$ is non-singular and has the 2-partitioned orthogonal property, with $A_1=\left(\begin{array}{ccc} 1 & 0 & 1\\ 0 & 1 & 1 \end{array} \right)$ and $A_2 = \left(\begin{array}{ccc} 1 & 1 & 1 \end{array}\right)$. By Corollary \ref{orth-property cor}, if $\C_1$ and $\C_2$ are LCD codes over $R$, then so are the matrix-product codes $[\C_1\, \C_1\, \C_2]\, A$ and $[\C_2\, \C_2\, \C_1]\,A$. Note that these two matrix-product codes correspond to the Turyn's $(a+x\,|\, b+x\,|\, a+b+x)$-construction.
\end{example}

\begin{remark}
Theorem \ref{orth-property thm} and Corollary \ref{orth-property cor} can be easily generalized to deal with more than two codes and a matrix with more than two submatrices concatenated vertically, where every row of a submatrix is orthogonal to every row of the other submatrices.
\end{remark}

\section{{\bf Matrix-Product Codes Arising from Torsion Codes}}

We utilize here some results from Section 2 to give constructions of LCD matrix-product codes over the residue field of a finite chain ring with torsion codes as input codes.

Recall that a finite commutative ring is a chain ring if it is a local principal ideal ring (see \cite[p. 339 and beyond]{Mc1} for more). In this section, we assume that {\bf $R$ is a finite commutative chain ring}. Let $\langle \ga \rangle$ be the maximal ideal of $R$ with $e$ the nilpotency index of $\ga$ and $R/\langle \ga \rangle$ the (finite) residue field of $R$. It follows that we have the following chain of ideals in $R$
$$\{0\} = \langle \ga^e \rangle \subseteq \langle \ga^{e-1} \rangle \subseteq \dots \subseteq \langle \ga \rangle \subseteq R.$$

For $r\in R$, denote by $\ol{r}\in R/\langle \ga \rangle$ the reduction of $r$ modulo $\langle \ga \rangle$. For $x=(x_1, \dots, x_m)\in R^m$, a code $\C$ of length $m$ over $R$, and $A=(a_{i,j})\in M_{s\times l}(R)$, set $\ol{x}=(\ol{x_1}, \dots, \ol{x_m})\in (R/\langle \ga \rangle)^m$, $\ol{\C}=\{\ol{x}\,|\, x\in \C\}$, and $\ol{A}$ the matrix $(\ol{a_{i,j}})\in M_{s\times l}(R/\langle \ga \rangle)$. The code $\ol{\C}$ (resp. the matrix $\ol{A}$) is called {\it the reduction code} of $\C$ (resp. {\it the reduction matrix} of $A$) modulo $\langle \ga \rangle$.

\begin{lem}\label{units}
For $u\in R$, $u\in U(R)$ if and only if $\ol{u}\neq \ol{0}$ in $R/\langle \ga \rangle$. 
\end{lem}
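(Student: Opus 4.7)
The plan is to use the fact that, since $R$ is a finite commutative chain ring, it is in particular a local ring whose unique maximal ideal is $\langle \ga \rangle$. The lemma is then essentially the standard characterization of units in a local ring, namely that the set of non-units coincides with the maximal ideal.

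For the forward direction, I would argue by contrapositive: if $\ol{u}=\ol{0}$ in $R/\langle \ga \rangle$, then $u\in \langle \ga \rangle$. Suppose for contradiction that $u\in U(R)$. Then $1=u\cdot u^{-1}\in \langle \ga \rangle$, which forces $\langle \ga \rangle = R$, contradicting the fact that $\langle \ga \rangle$ is a proper (maximal) ideal. Hence $u\notin U(R)$.

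For the reverse direction, assume $\ol{u}\neq \ol{0}$, i.e.\ $u\notin \langle \ga \rangle$. Since the ideals of $R$ form the chain $\{0\}=\langle \ga^e \rangle \subseteq \langle \ga^{e-1} \rangle \subseteq \dots \subseteq \langle \ga \rangle \subseteq R$, the only ideal of $R$ strictly larger than $\langle \ga \rangle$ is $R$ itself. Therefore the ideal $\langle u \rangle$, being not contained in $\langle \ga \rangle$, must equal $R$. In particular $1\in \langle u\rangle$, so there exists $v\in R$ with $uv=1$, giving $u\in U(R)$.

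The argument is routine and short; the only real point to highlight is the use of the chain structure of the ideals (or equivalently the locality of $R$) to conclude that an element outside the maximal ideal generates the whole ring. Since this is a well-known property of local rings, I do not expect any serious obstacle in the proof.
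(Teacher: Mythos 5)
Your proof is correct and follows essentially the same route as the paper, which simply invokes the standard fact that $U(R)=R-\langle \ga \rangle$ for a local commutative ring; you have merely spelled out the two directions of that equality. The only cosmetic point is that in the reverse direction it is cleaner to say that a non-unit generates a proper ideal, which must lie in the unique maximal ideal $\langle \ga \rangle$, rather than appealing to the chain of ideals, but both arguments are valid here.
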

\begin{proof}
This is clear when taking into account that $U(R)=R-\langle \ga \rangle$ since $R$ is local and commutative.
\end{proof}

The following result shows that an NSC matrix over $R$ can be gotten from an NSC matrix over $R/\langle \ga \rangle$ and vice versa.
\begin{lem}\label{NSC} A matrix $A\in M_{s\times l}(R)$ is NSC if and only if $\ol{A}$ is NSC.
\end{lem}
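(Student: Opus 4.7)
The plan is to reduce the statement to a single scalar fact about reductions of determinants, then quote Lemma \ref{units}. The key point is that reduction modulo $\langle \ga \rangle$ is a surjective ring homomorphism $R \to R/\langle \ga \rangle$, so it commutes with the usual Leibniz expansion of the determinant. This gives, for any square matrix $M$ over $R$, the identity $\ol{\det(M)} = \det(\ol{M})$.

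Next I would note that the ``top-$t$-rows'' and ``$t \times t$-submatrix'' operations commute with entry-wise reduction in an obvious way: $\ol{A_t} = (\ol{A})_t$, and the $t \times t$ submatrices of $(\ol{A})_t$ are exactly the reductions $\ol{M}$ of the $t \times t$ submatrices $M$ of $A_t$, indexed by the same choice of $t$ columns. So the NSC condition on $A$ amounts to $\det(M) \in U(R)$ for every such $M$ and every $t$, while the NSC condition on $\ol{A}$ amounts to $\det(\ol{M}) \in U(R/\langle \ga \rangle)$ for the same family of submatrices.

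Finally, by Lemma \ref{units}, $\det(M) \in U(R)$ if and only if $\ol{\det(M)} \neq \ol{0}$ in $R/\langle \ga \rangle$, i.e.\ $\det(\ol{M}) \neq \ol{0}$. Since $R/\langle \ga \rangle$ is a field, the latter is equivalent to $\det(\ol{M}) \in U(R/\langle \ga \rangle)$. Running this equivalence through every $t$ and every $t \times t$ submatrix of $A_t$ yields both implications simultaneously.

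There is no real obstacle here; the only step worth being careful about is justifying $\ol{\det(M)} = \det(\ol{M})$, which follows because reduction is a ring homomorphism and the determinant is a polynomial (with integer coefficients) in the matrix entries. Given that, the lemma is an immediate chain of equivalences.
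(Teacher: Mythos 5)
Your proof is correct and follows essentially the same route as the paper: both arguments reduce the NSC condition, submatrix by submatrix, to Lemma \ref{units}. The only difference is that you apply Lemma \ref{units} directly to the determinant via the identity $\ol{\det(M)}=\det(\ol{M})$, whereas the paper phrases the same step in terms of invertibility of square matrices and then converts back to non-singularity using \cite[Corollary 2.8]{FLL}; your version is, if anything, slightly more self-contained.
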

\begin{proof}
An immediate consequence of Lemma \ref{units} is that a square matrix over $R$ is invertible if and only if its reduction modulo $\langle \ga \rangle$ is invertible (see also \cite[p. 177]{BC}). Hence, for $t=1, \dots, s$, a $t\times t$ submatrix of $A_t$ is non-singular if and only if its reduction modulo $\langle \ga \rangle$ is non-singular. Since $R$ is finite, the proof follows from the fact that non-singularity and invertibility are two equivalent properties of a square matrix over $R$ (\cite[Corollary 2.8]{FLL}).
\end{proof}

\begin{remark}
For $s\geq 2$, it is useful to remember that a necessary and sufficient condition for the existence of an NSC matrix $A\in M_{s\times l}(R/\langle \ga \rangle)$ (or even $A\in M_{s\times l}(R)$) is that $s\leq l \leq |R/\langle \ga \rangle|$; see \cite[Proposition 3.3]{BN} (or \cite[Proposition 1]{As}).
\end{remark}

We now state the following known result over finite fields for a later use, although it is valid more generally over finite chain rings (as such rings are Frobenius, see \cite[Corollary 4.12]{FLL}).
\begin{prop}\label{min distance} $($\cite[Corollary 4.12]{FLL}$)$
Let $A\in M_{s\times s}(R/\langle \ga \rangle)$ be an NSC matrix and $\C_1, \dots, \C_s$ linear codes over $R/\langle \ga \rangle$ of the same length. Then we have the following bounds:
$$d([\C_1 \dots \C_s]\,A) \geq \min \{s\,d(\C_1), \dots, 2 d(\C_{s-1}), d(\C_s)\}$$
and
$$d(([\C_1 \dots \C_s]\,A)^\perp) \geq \min \{d(\C_1^\perp), 2\, d(\C_2^\perp), \dots, s\, d(\C_s^\perp)\}.$$
Furthermore, if $\C_s \subseteq \C_{s-1} \subseteq \dots \subseteq \C_1$, then all the above inequalities are equalities.
\end{prop}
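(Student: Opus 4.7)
The plan is to prove the two displayed inequalities by a ``largest nonzero input'' argument that exploits the NSC property of $A$, and then to upgrade both to equalities under the descending-chain hypothesis by explicitly producing weight-minimizing codewords. Throughout I will work over the residue field $R/\langle\ga\rangle$ and view each codeword of $[\C_1\dots\C_s]\,A$ as an $m\times s$ matrix whose $p$-th row equals $(c_1[p],\dots,c_s[p])\,A$.

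For the primal bound, I would fix a nonzero codeword $x=(c_1\,\dots\,c_s)\,A$ and let $t=\max\{i:c_i\neq 0\}$. Then the $p$-th row of $x$ reduces to $(c_1[p],\dots,c_t[p])\,A_t$, where $A_t$ denotes the top $t$ rows of $A$. The key consequence of NSC I invoke is this: if a nonzero $v\in(R/\langle\ga\rangle)^t$ were orthogonal to $t$ columns of $A_t$, those $t$ columns would assemble into a singular $t\times t$ submatrix of $A_t$, contradicting NSC. Hence whenever $c_t[p]\neq 0$ the corresponding row of $x$ has at least $s-t+1$ nonzero entries, and since this happens for at least $d(\C_t)$ values of $p$, one obtains $\mathrm{wt}(x)\geq(s-t+1)\,d(\C_t)$. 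Minimizing over $t\in\{1,\dots,s\}$ yields the first inequality.

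For the dual bound I would apply Proposition~\ref{MPC dual} to write $([\C_1\dots\C_s]\,A)^\perp=[\C_1^\perp\dots\C_s^\perp]\,(A^{-1})^t$, and then use the standard fact (essentially contained in \cite{BN}) that NSC is preserved under the operation ``invert, transpose, reverse row order.'' Writing $B$ for $(A^{-1})^t$ with rows reversed, this gives $[\C_1^\perp\dots\C_s^\perp]\,(A^{-1})^t=[\C_s^\perp\dots\C_1^\perp]\,B$ with $B$ still NSC, so applying the primal bound to this reformulation produces $\min_i(s-i+1)\,d(\C_{s-i+1}^\perp)=\min_j j\,d(\C_j^\perp)$, as required.

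For the equality case under $\C_s\subseteq\dots\subseteq\C_1$, observe that NSC forces $A_t$ to be the generator matrix of an $[s,t,s-t+1]$ MDS code over $R/\langle\ga\rangle$ (the lower bound on minimum distance comes from the argument above, the matching upper bound from the Singleton bound). Thus for each $t$ I can select $v\in(R/\langle\ga\rangle)^t$ with $\mathrm{wt}(v A_t)=s-t+1$, pick $c\in\C_t$ of weight $d(\C_t)$, and form the codeword $(v_1 c,\dots,v_t c,0,\dots,0)\,A$; the inclusion $\C_t\subseteq\C_i$ for $i\leq t$ is precisely what makes $v_i c\in\C_i$ a legitimate choice. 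The resulting codeword has weight exactly $(s-t+1)\,d(\C_t)$, matching the primal lower bound, and running the same construction on $[\C_1^\perp\dots\C_s^\perp]\,(A^{-1})^t$ (whose inputs satisfy the reversed chain $\C_1^\perp\subseteq\dots\subseteq\C_s^\perp$) yields equality in the dual bound. The main technical hurdle I anticipate is justifying the NSC-preservation step used for the dual, which is really a determinantal identity that I would either derive directly or import from \cite{BN,FLL}.
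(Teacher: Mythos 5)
The paper offers no proof of this proposition --- it is imported verbatim as \cite[Corollary 4.12]{FLL} --- so there is nothing internal to compare against; judged on its own, your argument is correct and is essentially the standard Blackmore--Norton/FLL proof. The ``largest nonzero input index'' argument for the primal bound, the reduction of the dual bound to the primal one via Proposition \ref{MPC dual}, and the explicit weight-$(s-t+1)d(\C_t)$ codewords for the equality case are all sound; the one step you flag as a hurdle --- that reversing the rows of $(A^{-1})^t$ yields an NSC matrix --- is a genuine fact following from Jacobi's identity on complementary minors (a $t\times t$ minor of the bottom $t$ rows of $(A^{-1})^t$ equals, up to a unit, a complementary $(s-t)\times(s-t)$ minor of the top $s-t$ rows of $A$), and is proved in \cite{BN}, so importing it is legitimate.
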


\begin{dfn} (\cite{NS}) Let $\C$ be a linear code over $R$. For $0\leq i \leq e-1$, we call the linear code $\ol{(\C:\ga^i)}$ the {\it $i$-torsion code associated to $\C$}, and we denote it by $T_i(\C)$, where $(C:\ga^i):=\{x\in R^m\,|\, \ga^ix\in C\}$
is {\it the submodule quotient code of $C$ by $\ga^i$}.
\end{dfn}

\begin{remark}\label{nested} For a linear code $\C$ over $R$, the following nesting property is obvious:
$$T_0(\C)\subseteq T_1(\C) \subseteq \dots \subseteq T_{e-1}(\C).$$
\end{remark}

\begin{lem} \label{Tor2} Let $\C$ be a linear code over $R$ and $0\leq i \leq e-1$. Then,
\begin{itemize}
\item[1.] $T_i(\C^\perp)=(T_{e-1-i}(\C))^\perp$,
\item[2.] $H(T_i(\C)) \subseteq T_{e-1}(H(\C))$, and
\item[3.] if $\C$ is LCD, then so is $T_i(\C)$.
\end{itemize}
\end{lem}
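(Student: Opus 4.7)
My plan is to tackle the three parts in order: (1) feeds directly into (2), and (2) yields (3) as an immediate corollary. The main obstacle will be the reverse inclusion in (1); once that is settled, everything else is a straightforward lifting argument exploiting $\ga^e=0$.

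For (1), one inclusion is a direct bilinear-form computation: if $y\in(\C^\perp:\ga^i)$ and $w\in(\C:\ga^{e-1-i})$, then $\langle\ga^i y,\,\ga^{e-1-i}w\rangle=\ga^{e-1}\langle y,w\rangle=0$, and since $\mbox{Ann}_R(\ga^{e-1})=\langle\ga\rangle$ we get $\ol{\langle y,w\rangle}=\ol 0$; this gives $T_i(\C^\perp)\subseteq (T_{e-1-i}(\C))^\perp$. For the reverse inclusion I would count $R/\langle\ga\rangle$-dimensions via the standard ``type'' invariants of codes over finite chain rings in \cite{NS}: if $\C$ has type $(k_0,\dots,k_{e-1})$, then $\dim_{R/\langle\ga\rangle}T_i(\C)=\sum_{j=0}^{i}k_j$ and $\C^\perp$ has type $(m-\sum_j k_j,\,k_{e-1},\dots,k_1)$, so $\dim T_i(\C^\perp)+\dim T_{e-1-i}(\C)=m$, which matches the codimension of the right-hand side inside $(R/\langle\ga\rangle)^m$ and forces equality.

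For (2), I would use (1) with $i$ replaced by $e-1-i$ to rewrite $T_i(\C)^\perp=T_{e-1-i}(\C^\perp)$, so $H(T_i(\C))=T_i(\C)\cap T_{e-1-i}(\C^\perp)$. Given $\ol y$ in this intersection, choose lifts $u,v\in R^m$ with $\ga^i u\in\C$, $\ga^{e-1-i}v\in\C^\perp$, and $u=v+\ga w$ for some $w\in R^m$. The key observation is that $\ga^{e-1}u\in H(\C)$: on one hand $\ga^{e-1}u=\ga^{e-1-i}(\ga^i u)\in\C$, and on the other, using $\ga^e=0$, $\ga^{e-1}u=\ga^{e-1}v=\ga^i(\ga^{e-1-i}v)\in\C^\perp$. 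Hence $u\in(H(\C):\ga^{e-1})$ and $\ol y=\ol u\in T_{e-1}(H(\C))$.

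Part (3) is then a one-liner: if $\C$ is LCD, then $H(\C)=\{0\}$; since $(\{0\}:\ga^{e-1})=\langle\ga\rangle^m$ reduces to $\{\ol 0\}$ in $(R/\langle\ga\rangle)^m$, we obtain $T_{e-1}(H(\C))=\{\ol 0\}$, and combining with (2) gives $H(T_i(\C))=\{\ol 0\}$, i.e., $T_i(\C)$ is LCD.
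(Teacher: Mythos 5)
Your proof is correct, and for parts (2) and (3) it follows essentially the same route as the paper: rewrite $H(T_i(\C))=T_i(\C)\cap T_{e-1-i}(\C^\perp)$ via part (1), lift, multiply up to $\ga^{e-1}$, and then observe that $(\{0\}:\ga^{e-1})=\langle\ga\rangle^m$ reduces to zero. Two points of difference are worth noting. First, in part (2) the paper picks a single lift $z'\in(\C:\ga^i)\cap(\C^\perp:\ga^{e-1-i})$ of $\ol z$, tacitly assuming such a common lift exists; your version, which takes separate lifts $u,v$ with $u-v\in\ga R^m$ and uses $\ga^{e-1}(u-v)=\ga^e w=0$ to conclude $\ga^{e-1}u=\ga^{e-1}v$, sidesteps that existence question and is the more careful argument. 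Second, for part (1) the paper simply cites Norton--S\u{a}l\u{a}gean, whereas you supply a proof: the direct inclusion $T_i(\C^\perp)\subseteq(T_{e-1-i}(\C))^\perp$ via $\mbox{Ann}_R(\ga^{e-1})=\langle\ga\rangle$, and equality by counting residue-field dimensions through the type invariants ($\dim T_i(\C)=\sum_{j\le i}k_j$ and the dual type $(m-\sum_j k_j,k_{e-1},\dots,k_1)$). That dimension count is exactly the standard argument and checks out, though of course it leans on the structure theory from the same reference the paper cites, so it is a proof of convenience rather than a genuinely independent one.
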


\begin{proof}\hfill
\begin{itemize}
\item[1.] See \cite{NS}.
\item[2.] Let $\ol{z}\in H(T_i(\C))$. Then, by part 1, $\ol{z}\in T_i(\C)\cap T_{e-1-i}(\C^\perp)$. Then for any $z'\in (\C:\ga^i)\cap (\C^\perp: \ga^{e-1-i})$ with $\ol{z}=\ol{z'}$, we have $\ga^i z' \in \C$ and $\ga^{e-1-i} z' \in \C^\perp$. By the linearity of $\C$ and $\C^\perp$, we get $\ga^{e-1}z' \in \C$ and $\ga^{e-1}z' \in \C^\perp$. So, $z'\in (H(\C):\ga^{e-1})$ and thus $\ol{z'}=\ol{z}\in T_{e-1}(H(\C))$.
\item[3.] See a different proof of this fact in \cite[Theorem 3.4]{LL1}. Suppose that $\C$ is LCD; so $H(\C)=\{0\}$. By part 2, $$H(T_i(\C))\subseteq T_{e-1}(H(\C))=\ol{(H(C):\ga^{e-1})}=\ol{(\{0\}:\ga^{e-1})}.$$
    Let $y=(y_1, \dots, y_m)\in (\{0\}:\ga^{e-1})$. We claim that $y_i \in \langle \ga \rangle$ for every $i=1, \dots, m$. If $y_j=0$ for some $1\leq j \leq m$, then obviously $y_j\in \langle \ga \rangle$. Assume that $y_j \neq 0$ for some $1\leq j \leq m$. As $\ga^{e-1}y=(\ga^{e-1}y_1, \dots, \ga^{e-1}y_m)=(0, \dots, 0)$, it follows in particular that $\ga^{e-1}y_j=0$. Since $\ga^{e-1}$ and $y_j$ are both nonzero, $y_j$ is a zero divisor. Since $R$ is local with its maximal ideal $\langle \ga \rangle$, all non-units of $R$ are elements of $\langle \ga \rangle$. Since $R$ is finite, the nonzero non-units of $R$ are precisely the zero divisors of $R$; that is, $\langle \ga \rangle$ consists exactly of the zero element and the zero divisors of $R$. Thus, $y_j\in \langle \ga \rangle$ in this case as well. Now, since $y_i\in \langle \ga \rangle$ for every $i=1, \dots, m$, $\ol{y_i}=\ol{0}$ in $R/\langle \ga \rangle$ for every $i=1, \dots, m$ and hence $\ol{y} = \ol{0}$ in $(R/\langle \ga \rangle)^m$. This shows that $\ol{(\{0\}:\ga^{e-1})}=\{\ol{0}\}$ and, therefore, $H(T_i(\C))=\{\ol{0}\}$. Hence, $T_i(\C)$ is LCD.
\end{itemize}
\end{proof}

\begin{remark} Notice that the converse of part 3 of the above result is true in general (see \cite[Example 1]{LL1}).
\end{remark}

The following result gives a way of constructing LCD codes over the field $R/\langle \ga \rangle$ of certain parameters given an LCD code over the ring $R$.
\begin{thm}\label{torsion theorem}
Let $\C$ be an LCD code of length $m$ over $R$ and $A\in M_{s\times s}(R/\langle \ga \rangle)$ non-singular.
\begin{itemize}
\item[1.] If $AA^t$ is diagonal, then $[T_{i_1}(\C) \dots T_{i_s}(\C)]\,A$ is an LCD code of length $ms$ over $R/\langle \ga \rangle$ for $0\leq i_j \leq e-1$ and $1\leq j \leq s$. Moreover, if $A$ is NSC and $i_s \leq i_{s-1} \leq \dots \leq i_1$, then
    $$d([T_{i_1}(\C) \dots T_{i_s}(\C)]\,A) =\min\{sd(T_{i_1}(\C)), \dots, 2d(T_{i_{s-1}}(\C)), d(T_{i_s}(\C))\}$$ and
    $$d(([T_{i_1}(\C) \dots T_{i_s}(\C)]\,A)^\perp)=\min\{d((T_{i_1}(\C))^\perp), 2 d((T_{i_2}(\C))^\perp), \dots, sd((T_{i_s}(\C))^\perp)\}.$$
\item[2.] If $AA^t$ is antidiagonal, then $[T_{i_1}(\C) T_{i_2}(\C) \dots T_{i_2}(\C) T_{i_1}(\C)]\,A$ is an LCD code of length $ms$ over $R/\langle \ga \rangle$ for $0\leq i_j \leq e-1$ and $1\leq j \leq \lfloor \frac{s+1}{2} \rfloor$.
\item[3.] $[\underbrace{T_i(\C) \dots T_i(\C)}_s]\,A$ is an LCD code of length $ms$ over $R/\langle \ga \rangle$ for every $0\leq i \leq e-1$. Moreover, if $A$ is NSC, then
    $$d([\underbrace{T_i(\C) \dots T_i(\C)}_s]\,A)=d(T_i(\C))$$ and
    $$d(([\underbrace{T_i(\C) \dots T_i(\C)}_s]\,A)^\perp)= d((T_i(\C))^\perp).$$
\item[4.] If $A$ is upper triangular (resp. lower triangular), then $[T_{i_1}(\C) \dots T_{i_s}(\C)]\,A$ (resp. \linebreak $[T_{i_s}(\C) \dots T_{i_1}(\C)]\,A$) is an LCD code of length $ms$ over $R/\langle \ga \rangle$ for $0\leq i_j \leq e-1$, $i_j \leq i_{j+1}$, and $1\leq j \leq s-1$.
\end{itemize}
\end{thm}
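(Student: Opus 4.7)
The plan is to reduce each of the four parts to an application of Corollary \ref{LCD cor 1}, working over the residue field $R/\langle \ga \rangle$. The first step is to invoke Lemma \ref{Tor2}(3): since $\C$ is LCD over $R$, every torsion code $T_{i_j}(\C)$ is an LCD code over the field $R/\langle \ga \rangle$. This is the only ring-theoretic input needed; from this point on the argument lives entirely over $R/\langle \ga \rangle$ and relies on Section 2.

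Next, I would match each of the four cases to one of the sufficient conditions in Corollary \ref{LCD cor 1}. For part (1), non-singularity of $A$ over a field forces every diagonal entry of the diagonal matrix $AA^t$ to be a unit (a non-zero element of a field), so condition 3 applies and $[T_{i_1}(\C) \dots T_{i_s}(\C)]\,A$ is LCD. For part (2), the input tuple $(T_{i_1}(\C), T_{i_2}(\C), \dots, T_{i_2}(\C), T_{i_1}(\C))$ is palindromic, so $\C_j^\perp = \C_{s-j+1}^\perp$ for every $j$; together with the antidiagonal shape of $AA^t$ (whose anti-diagonal entries are again units by the same non-singularity argument), this is exactly condition 4. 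For part (3), all input codes coincide and $A$ is non-singular, which is condition 7. For part (4), Remark \ref{nested} gives the chain $T_{i_1}(\C) \subseteq T_{i_2}(\C) \subseteq \dots \subseteq T_{i_s}(\C)$ whenever $i_1 \le i_2 \le \dots \le i_s$, so the upper-triangular (resp. lower-triangular) hypothesis on $A$ matches condition 5 (resp. condition 6).

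For the minimum-distance statements in parts (1) and (3), Proposition \ref{min distance} applies once the input codes form a nested chain $\C_s \subseteq \C_{s-1} \subseteq \dots \subseteq \C_1$ and $A$ is NSC. In part (1), the assumption $i_s \le i_{s-1} \le \dots \le i_1$ together with Remark \ref{nested} yields precisely the chain $T_{i_s}(\C) \subseteq \dots \subseteq T_{i_1}(\C)$, and NSC-ness of $A$ is assumed; the two displayed formulas then follow from the equality case of Proposition \ref{min distance}. Part (3) is the trivial case of nesting (all codes equal), and both sides of the inequalities in Proposition \ref{min distance} collapse to $d(T_i(\C))$ and $d((T_i(\C))^\perp)$, giving the two claimed equalities.

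The main obstacle is essentially bookkeeping: one must check that each sub-claim falls under one of the numbered conditions of Corollary \ref{LCD cor 1}, with the mildly non-trivial observation that in parts (1) and (2) the non-singularity of $A$ over the field automatically promotes the non-zero diagonal and antidiagonal entries of $AA^t$ to units, so the unit hypothesis in conditions 3 and 4 is automatic. No essentially new idea is required; the theorem is a repackaging of Lemma \ref{Tor2}, Remark \ref{nested}, Corollary \ref{LCD cor 1}, and Proposition \ref{min distance}.
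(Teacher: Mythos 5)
Your proof is correct and takes essentially the same route as the paper's: Lemma \ref{Tor2}(3) to get LCD-ness of the torsion codes, a case-by-case matching with the conditions of Corollary \ref{LCD cor 1}, Remark \ref{nested} for the nested chains, and Proposition \ref{min distance} for the distance equalities. Your explicit observation that non-singularity of $A$ over the residue field forces the nonzero (anti)diagonal entries of $AA^t$ to be units is a detail the paper leaves implicit but is needed to invoke conditions 3 and 4.
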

\begin{proof}
To begin with, it follows from Lemma \ref{Tor2} that $T_i(\C)$ is LCD for every $0\leq i \leq e-1$. Now, the rest of the proof follows from Corollary \ref{LCD cor 1}, Proposition \ref{min distance}, and Remark \ref{nested}.
\end{proof}

\begin{remark}\hfill
\begin{enumerate}
\item[1.] In part 2 of Theorem \ref{torsion theorem}, we did not bother to use Proposition \ref{min distance} to declare a sharp estimate on the minimum distance of the matrix-product code. The reason is that we would then need the assumption $i_1 \leq i_2 \leq \dots \leq i_{\lfloor \frac{s+1}{2}\rfloor} \leq \dots \leq i_2 \leq i_1$, which would imply the equality of all of the input codes. But, if so, it would be better to use part 3 as the only requirement in part 3 is that $A$ be non-singular, which is more general than the requirement in part 2 that $A$ be non-singular and $AA^t$ anti-diagonal.
\item[2.] In part 4 of Theorem \ref{torsion theorem}, it should be obvious that an upper triangular (or lower triangular) matrix is never an NSC matrix if $s\geq 2$. This is why we did not consider using Proposition \ref{min distance} here.
\end{enumerate}
\end{remark}

\begin{example}
Consider $\Z_4/\langle 2 \rangle=\F_2$. Let $\C$ be the $[8,4,2]$-code over $\Z_4$ generated by the matrix
$$G=\left(\begin{array}{cccccccc} 1&0&0&0&0&1&2&1\\ 0&1&0&0&1&2&3&1\\ 0&0&1&0&0&0&3&2\\ 0&0&0&1&2&3&1&1 \end{array}\right),$$
which is in the standard form $(I_4\; A_{0,1})$ (see \cite{NS}). By \cite[Theorem 3.5]{LL1}, $\C$ is LCD over $\Z_4$ since $\mbox{det}(GG^t)\in U(\Z_4)$. By Lemma \ref{Tor2}, $T_0(\C)=\ol{(\C:2^0)}=\ol{\C}$ and $T_1(\C)=\ol{(\C:2)}$ are LCD binary codes. We can see, by \cite{NS} again, that $T_0(\C)$ and $T_1(\C)$ are equal as they have the same generating matrix over $\F_2$
$$\ol{G}=\left(\begin{array}{cccccccc} 1&0&0&0&0&1&0&1\\ 0&1&0&0&1&0&1&1\\ 0&0&1&0&0&0&1&0\\ 0&0&0&1&0&1&1&1 \end{array}\right).$$
Set $T(\C)=T_0(\C)=T_1(\C)$. Then $T(\C)$ is an $[8,4,2]$ binary code. Now, by Theorem \ref{torsion theorem}, for any non-singular matrix $A\in M_{2\times 2}(\F_2)$, the binary matrix-product code $[T(\C), T(\C)]\, A$ is an LCD code with parameters $[16,8,2]$ (see \cite[Theorem 1]{HLR}).
\end{example}

\section*{Acknowledgement}
A. Deajim and M. Bouye would like to express their gratitude to King Khalid University for providing administrative and technical support.


\begin{thebibliography} {Dillo 83}

\bibitem{As}
{B. van Asch}, {\em Matrix-product codes over finite chain rings}, AAECC {\bf 19} (2008), 39--49.

\bibitem{BD}
{M. Boulagouaz} and {A. Deajim}, {\em Matrix-product codes over commutative rings and constructions arising from $(\s,\d)$-codes}, arXiv:1910.08899v1.

\bibitem{BC}
J. Brawley and R. Camble, {\em Involutory matrices over finite commutative rings}, Linear Alg. Appl {\bf 21} (1978), 175--188.

\bibitem{BN}
{T. Blackmore} and {G. Norton}, {\em Matrix-product codes over $\F_q$}, AAECC {\bf 12} (2001), 477--500.

\bibitem{CLL}
B. Chen, L. Lin, and H. Liu, {\em Matrix product codes with Rosenbloom-Tsfasman metric}, Acta Math. Scientia {\bf 33} (2013), 687--700.

\bibitem{DB}
A. Deajim and M. Bouye, {\em On self-orthogonality and self-duality of matrix-product codes over commutative rings}, arXiv:1910.08900v1.

\bibitem{DKOSS}
S. Dougherty, J. Kim, B. \"{O}zkaya, L. Sok, and P. Sol\'{e}, {\em The combinatorics of LCD codes: linear programming bound and orthogonal matrices}, Int. J. Inform. Coding Theory {\bf 4} (2017), 116--128.

\bibitem{FLL}
{Y. Fan}, {S. Ling}, and {H. Liu}, {\em Matrix-product codes over finite commutative Frobenius rings}, Des. Codes Crypt. {\bf 71} (2014), 201--227.

%\bibitem{GG}
%{K. Guenda} and {T. Gulliver}, {\em MDS and self-dual codes over rings}, Finite Fields Appl. {\bf 18} (2012), 1061--1075.

\bibitem{GHR}
{C. Galindo}, {F. Hernando}, and {D. Ruano}, {\em New quantum codes from evaluation and matrix-product codes}, Finite Fields Appl. {\bf 36} (2015), 98--120.

\bibitem{GOS}
C. G\"{u}neri, B. \"{Ozkaya}, and P. Sol\'{e}, {\em Quasi-cyclic complementary dual codes}, Finite Fields Appl. {\bf 42} (2016), 67--80.

\bibitem{HLR}
{F. Hernando}, {K. Lally}, and {D. Ruano}, {\em Construction and decoding of matrix-product codes from nested codes}, AAECC {\bf 20} (2009), 497--507.

\bibitem{HR}
{F. Hernando} and {D. Ruano}, {\em Decoding of matrix-product codes}, J. Alg. Appl. {\bf 12} (2013), 1250185.

\bibitem{JSU}
S. Jitman, E. Sangwisut, and P. Udomavanich, {\em Hulls of cyclic codes over $\Z_4$}, Disc. Math. {\bf 343} (2020), 111621.

\bibitem{LDL}
C. Li, C. Ding, and S. Li, {\em LCD cyclic codes over finite fields}, IEEE Trans. Inform. Theory {\bf 63} (2017), 4344--4356.

\bibitem{LL1}
X. Liu and H. Liu, {\em LCD codes over finite chain rings}, Finite Fields Appl. {\bf 34} (2015), 1--19.

\bibitem{LL2}
{X. Liu} and {H. Liu}, {\em Matrix-product complementary dual codes}, arXiv:1604.03774v1, 2016.

\bibitem{Mas}
J.L. Massey, {\em Linear codes with complementary duals}, Disc. Math. {\bf 106--107} (1992), 337--342.

\bibitem{Mc1}
B. McDonald, {\em Finite Rings with Identity}, Marcel Dekker, New York, 1974.

\bibitem{Mc}
{B. McDonald}, {\em  Linear Algebra over Commutative Rings}, Marcel Dekker,
New York, 1984.

\bibitem{MS}
M. Ould Mdeni and E. Souidi, {\em Construction and bound on the performance of matrix-product codes}, Appl. Math. Sice. {\bf 5} (2011), 929--934.

\bibitem{MGGSS}
A. Melakhessou, K. Guenda, T. Gulliver, M. Shi, and P. Sol\'{e}, {\em On codes over $\F_q+v\F_q+v^2\F_q$}, J. Appl. Math. and Comp. {\bf 57} (2018), 375--391.

\bibitem{NS}
G. Norton and A. S\v{a}l\v{a}gean, {\em On the structure of linear and cyclic codes over a finite chain ring}, AAECC {\bf 10} (2000), 489--506.

\bibitem{TJ}
S. Thipworawimon and S. Jitman, {\em Hulls of linear codes with applications}, J. App. Math. and Comp. {\bf 62} (2020), 325--340.

%\bibitem{S}
%N. Sendrier, {\em Linear codes with complementary duals meet the Gilbert-Varshamov bound}, Disc. Math. {\bf 304} (2004), 345--347.

%\bibitem{M}{\tt http://magma.maths.usyd.edu.au/magma/}

\end{thebibliography}
\end{document}